\newcommand{\nbresponse}[1]{\\{\textbf{\color{blue}#1}}}
\newcommand{\pd}[2]{\frac{\partial #1}{\partial #2}}
\newcommand{\mtx}[1]{\begin{bmatrix}#1\end{bmatrix}}
\newtheorem{definition}{Definition}
\newtheorem{theorem}{Theorem}
\title{\LARGE \bf
Sensor Placement on a Cantilever Beam Using Observability Gramians}
\author{Natalie L. Brace$^{1}$, Nicholas B. Andrews$^{1}$, Jeremy Upsal$^{2}$, and Kristi A. Morgansen$^{1}$
\thanks{$^{1}$The authors are with the Department of Aeronautics and Astronautics, University of Washington, Seattle, WA 98195-2400. $^{2}$The author is with the Department of Applied Mathematics, University of Washington, Seattle, WA 98195-3925. Contact {\tt\small nbrace@uw.edu,
nian6018@uw.edu, jupsal@uw.edu, morgansn@uw.edu}. This material is based upon work funded by the Joint Center for Aerospace Technology Innovation.}}%
\begin{document}

\onecolumn

\twocolumn
\maketitle
\thispagestyle{empty}
\pagestyle{empty}

\setcounter{page}{1}

\begin{abstract}
Working from an observability characterization based on output energy sensitivity to changes in initial conditions, we derive both analytical and empirical observability Gramian tools for a class of continuum material systems.
Using these results, optimal sensor placement is calculated for an Euler-Bernoulli cantilever beam for the following cases:  analytical observability for the continuum system and analytical observability for a finite number of modes.  Error covariance of an Unscented Kalman Filter is determined for both cases and compared to randomly placed sensors to demonstrate effectiveness of the techniques.
\end{abstract}

\section{INTRODUCTION} \label{sec:intro}

Structural health monitoring is the process of using sensors, either in situ or remote, in combination with analytical and empirical tools and algorithms to assess the dynamic material and geometric properties of structures and systems.  The manufacturing processes for most such continuum systems present challenges with incorporating sensors or with replacing them in the event of failure.  For example, aircraft wings manufactured from composites can have sensors embedded during manufacturing, but replacing them after the fact would damage the material.  Advection-diffusion processes of ocean properties such as temperature, salinity and particulate operate over such large physical scales that continual sensing across the entire process at high resolution is prohibitive.  A typical question that is addressed in monitoring such systems is where and when to place sensors and what tools to use to process the information from the sensors.  Generally, either a finite dimensional modal approach is taken or, in some simpler systems, analysis of the infinite dimensional continuum system can be addressed.  Here, we are specifically interested in extending recent developments in empirical methods for finite dimensional nonlinear systems to continuum systems to eventually consider models where complexities in the dynamics such as nonlinearities lead to suboptimal results from existing methods.  We are particularly motivated by applications in strain sensing in insect flight, monitoring of aircraft wing flexure as related to ride quality for active control, and deployment of sensor packages in oceanographic monitoring networks.  In order to ground and assess our studies, in this work we focus on the analytically tractable Euler-Bernoulli cantilever beam.

The current approaches to obtaining dynamic process information for such systems is generally based on finite dimensional approximations of the full continuum system using modal analysis or on finite element models. Controllability and observability analysis on general modal systems is considered by Yang in \cite{Yang1995}.  Their work determined that the systems considered are observable with a single sensor so long as it is located in an appropriate place. Analytical controllability and observability of infinite dimensional continuum systems has been considered in \cite{ElJai1988} and \cite{Curtain2012}, however, the results are general and not closed form for the class of systems considered here.
In situations where analytical models are intractable, the simulation-based empirical observability Gramian \cite{Krener2009} was developed and has been used to assess (un)observability indices for finite dimensional systems.  In this technique, the initial conditions of the simulations are perturbed by particular amounts, and in the limit as the perturbations become small, the empirical observability Gramian becomes exactly the analytical observability Gramian.

Optimal sensor placement using an exact analytical observability Gramian was assessed by Georges \cite{Georges2017} for a class of advection-diffusion partial differential equation (PDE) systems. The class of systems considered here, however, does not immediately admit the tools used in \cite{Georges2017}, although with the addition of a damping term the Euler-Bernoulli beam would admit a Riesz basis \cite{Guo2001} similar to Georges' system. In another approach, a non-classical micro-beam is studied to find that measuring the moment at the root of the beam produces an exactly observable system \cite{Edalatzadeh2017}.  For truncated finite dimensional representations of such mechanical continuum systems,  \"{O}rtel \cite{Oertel2013} considered optimal placement of an accelerometer, gyro, and strain gauge on a cantilever beam with respect to a penalty function designed to minimize the contributions of the fourth and fifth modes while promoting the first three modes. Work by  Menuzzi \cite{Menuzzi2017} used finite element analysis modeling to place piezoelectric elements on a cantilever beam, optimizing the trace of the linear observability Gramian via topology optimization that measures sensitivity through the derivative of the Lyapunov equation. In related work with insects \cite{Hinson2015GyroscopicSensitivity},  Hinson placed bending and shear strain sensors on the wings of a hawkmoth to maximize observability of inertial rotations with a cost function based on condition number and the inverse of the minimum eigenvalue of the empirical observability Gramian.    

The primary contribution of the work here is the extension of empirical observability Gramian tools from finite dimensional systems to continuum systems.  To facilitate this development, we consider a standard Euler-Bernoulli beam equation for which analytical results can be exactly calculated and compared to the empirical results.  The resulting analytical observability parallels the developments of \cite{Georges2017} with the benefit of the parallel empirical tool being viable for systems to which those analytical tools do not apply.  We demonstrate the usefulness of the results via the task of optimal sensor placement with demonstration of improvement in error covariance of an Unscented Kalman Filter (UKF).
%

The paper is organized as follows: the model of the Euler-Bernoulli beam is presented in Section \ref{sec:model}, followed by background on observability in Section \ref{sec:observabilityTools}. The analytical and empirical Gramians for the PDE system are developed in Section \ref{sec:analytical}. In Section \ref{sec:numerical}, numerical results are presented for sensor placement and estimation of a vibrating beam, and conclusions are presented in Section \ref{sec:future}.


\section{System Models}\label{sec:model}

\begin{figure}
    \centering
    \includegraphics[height=1in]{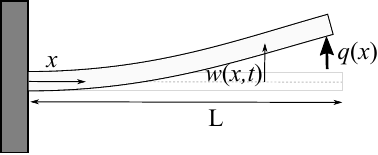}
    \caption{Model of a cantilever beam.}
    \label{fig:model}
\end{figure}

The transverse deflection, $w(x,t)$, of an  Euler-Bernoulli cantilever beam clamped at $x=0$ and free at $x=L$  under free vibration is described by the PDE
\begin{equation}
\begin{gathered}\label{eq:freePDE}
    \pd{^2}{x^2} \left( E(x)I(x) \pd{^2 w(x,t)}{x^2}\right) = -\mu \pd{^2 w(x,t)}{t^2} \\
    w(0,t) = w_x(0,t) = w_{xx}(L,t) = w_{xxx}(L,t) = 0, 
\end{gathered}
\end{equation}
with initial condition $w_0(x) = w(x,0)$, where $w_{ x\ldots x}(a,t) = \left. \pd{}{x} \left( \cdots \left(\pd{w(x,t)}{x}\right) \right)\right|_{x=a}$,  $E(x)$ is the elastic modulus, $I(x)$ is the second moment of inertia, and $\mu$ is the mass per unit length \cite{Gatti2014}.

This system can be solved using separation of variables by letting $w(x,t) = \phi(x)\eta(t)$, with the resulting continuum system solution, $\Sigma_\infty$, given by
\begin{equation}
\Sigma_\infty: \qquad
\begin{aligned}
w(x,t) &= \sum_{i=1}^\infty \phi_i(x) \eta_i(t), \\
y(x_\ell,t) &= h_\ell \sum_{i=1}^\infty \left. \pd{^2 \phi(x)}{x^2}\right|_{x=x_\ell} \eta_i(t), 
\end{aligned} \label{eqn:pdesoln}
\end{equation}
where the measurements are the strain taken at location $x_\ell$ at which point the beam's height from the neutral axis is equal to $h_\ell$. The spatial mode shapes of the beam, $\phi_i(x)$, and temporal modal coefficients, $\eta_i(t)$, satisfy the ordinary differential equations (ODE)
\begin{align}
 \pd{^2}{x^2} \left( E(x)I(x) \pd{^2 \phi_i}{x^2}\right) = \omega_i^2 \phi_i, \quad
 \pd{^2 \eta_i}{t^2} = -\mu \eta_i \omega_i^2.
\end{align}
\color{black}
 
Under the assumption of a uniform cross section, i.e. $E$ and $I$ are constant, the mode shapes are given by 
\begin{align}
    \phi_i(x) &= \cosh{b_i x} - \cos{b_i x} + f_i(L)(\sin{b_i x} - \sinh{b_i x}), \nonumber \\
    & \qquad \qquad f_i(L) = \frac{\cos{b_i L} + \cosh{b_i L}}{\sin{b_i L} + \sinh{b_i L}}, \label{eq:modeShapes}
\end{align}
where $b_i$ satisfies $\cos(b_i L) \cosh(b_i L) = -1$ \cite{Gatti2014}; the first ten mode shapes are shown in Fig. \ref{fig:mode_shapes}. The time-dependence of the modal coefficients is
\begin{align}
    \eta_i(t) = \alpha_{1,i} \cos(\omega_i t) + \alpha_{2,i} \frac{\sin(\omega_i t)}{\omega_i}, ~ \omega_i = b_i^2 \sqrt{\frac{EI}\mu} \label{eq:eta_i}
\end{align}
where the generalized Fourier coefficients $\alpha_{1,i}$ and $\alpha_{2,i}$ are determined by the initial conditions with respect to shape and velocity as given in section \ref{sec-continuumObsvGram}.
%

To approximate $\Sigma_\infty$ as a finite-dimensional system, the summations are truncated and the states of the system are defined as the first $n_\phi$ modes and their derivatives, 
  $H = [\eta_1  ~~ \eta_2 ~~  \cdots ~~  \eta_n  ~~ \dot{\eta}_1  ~~ \dot{\eta}_2  ~~ \cdots ~~  \dot{\eta}_n]^\top \in \mathbb{R}^{n},$
where ${n = n_\eta  n_\phi}$ with $n_\eta=2$ to account for the second order ODE. The dynamics  for the coefficients $\eta_i$ can then be written as
\begin{equation}
\Sigma_n: \qquad
\begin{aligned}
\dot{H} &= AH, 
\quad A = \mtx{0 & I_{n_\phi} \\ \Omega_{n_\phi} & 0},  \\
\mathbf{y} &= CH, 
\end{aligned} \label{eq:linSys}
\end{equation}
where $I_{n_\phi}$ is the $n_\phi \times n_\phi$ identity matrix, $\Omega_{n_\phi} =\text{diag}(\bar{\boldsymbol{\omega}})$ with  $\bar{\boldsymbol{\omega}} = \mtx{-\omega_1^2 & -\omega_2^2 & \cdots &-\omega_{n_\phi}^2}$, and each row $c_\ell$ of measurement matrix $C \in \mathbb{R}^{p \times n}$ is defined as the strain at sensor location $x_\ell$ given by
\begin{align}
 &c_\ell = h_\ell \mtx{\pd{^2 \phi_1(x_{\ell})}{x^2} & \pd{^2 \phi_2(x_{\ell})}{x^2} ~ \cdots ~ \pd{^2 \phi_{n_\phi}(x_{\ell})}{x^2} & \mathbf{0}_{1 \times n_\phi}}. \label{eq:c_i} 
\end{align}
\\

\begin{figure}
    \centering
    \includegraphics[width =0.5\textwidth]{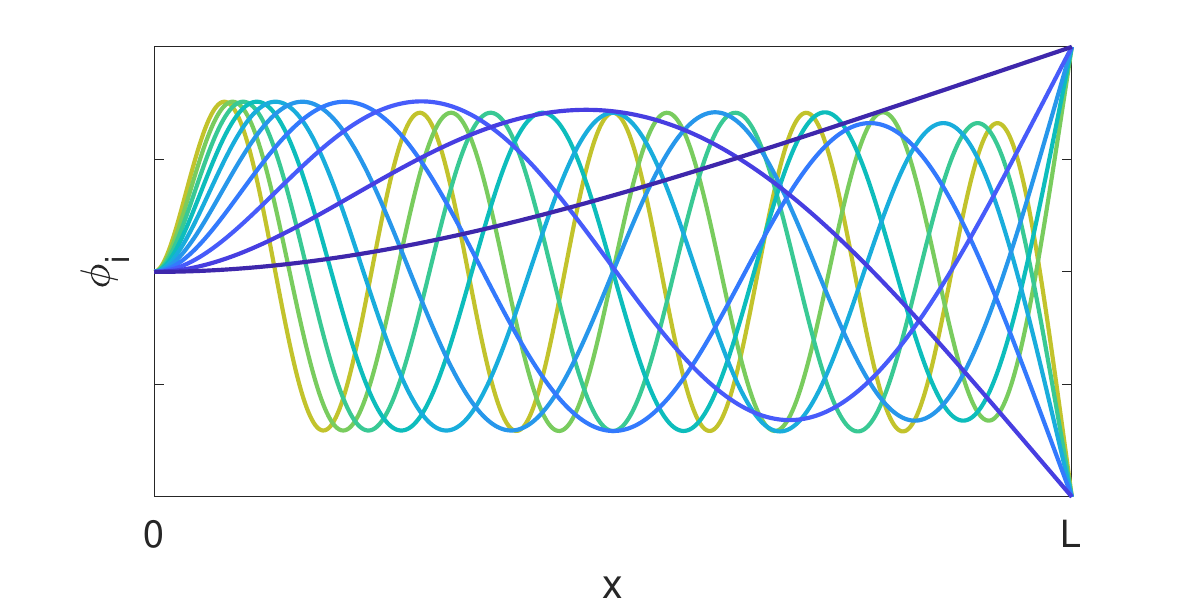}
    \caption{First ten mode shapes for an Euler-Bernoulli beam fixed at the left end.}
    \label{fig:mode_shapes}
\end{figure}

\section{Observability Tools} \label{sec:observabilityTools}

Observability describes the feasibility of uniquely determining an initial state of a system based on measurements of the system over a finite time interval. If the unknown initial state $\mathbf{x}_0$ can be uniquely determined in an open neighborhood of $\mathbf{x}_0$ from the outputs $y$, then the system is {\em weakly observable. }Linear systems can be evaluated analytically using the standard observability matrix and Gramian, whereas a computational approach can be taken to generate an empirical observability Gramian for nonlinear and analytically intractable systems. In combination with observability tools, measures of observability based on those tools can tell us whether or not a system is observable as well as provide the basis for a metric for optimal placement of sensors.

\subsection{Linear Observability for Finite-Dimensional Systems}

In linear systems, the observability matrix, $\mathcal{O}$, is obtained by differentiating the output $\mathbf{y}$ in (\ref{eq:linSys}) with respect to time and collecting terms multiplying the state:
\begin{align}
    \mathcal{O} = \mtx{
    C \\ CA \\ \vdots \\ CA^{n-1} }.
    \label{eq:Omtx}
\end{align}
A linear system is then observable if and only if $\mathcal{O}$ is full rank \cite{chen1999linear}.

Another tool to determine observability is the observability Gramian, which quantitatively captures the sensitivity of the measurements to a change in the initial conditions. The standard expression of the observability Gramian for a linear time-invariant (LTI) system is 
\begin{align}
    W_o(t) &= \int_{0}^{t} e^{A^\top \tau}C^\top C e^{A \tau} d\tau \label{eq:Wlinear}
    \end{align}
which can also be written as    
\begin{align}
   W_{o}(t) &= \sum_{\ell=1}^{p}\int_{0}^{t}  \pd{y_\ell(\tau)}{\mathbf{x}_0} \left( \pd{y_\ell(\tau)}{\mathbf{x}_0} \right)^\top d\tau,
   \label{eqn:vargramian}
\end{align}
where we note that the derivative of $y_\ell$ with respect to $\mathbf{x}_0$ is a column vector.
If $W_o(t) \in \mathbb{R}^{n \times n}$ is nonsingular for some $t>0$, the system will be observable \cite{chen1999linear}. Furthermore, the eigenvector associated with the largest (smallest) eigenvalue indicates the mode that is most (least) observable.

\subsection{Linear Observability for Infinite-Dimensional Systems}

The observability Gramian for a continuum system developed in this paper will be based off of extending \eqref{eqn:vargramian}, however, 
 definitions of observability for infinite-dimensional systems similar to those of LTI systems and are included here for completeness. As presented in \cite{Curtain2012}, a system
\begin{align}
    \dot{z}(t) = A_\infty z(t), ~ y(t) = C_\infty z(t), ~ z \in Z_\infty, y\in Y_\infty \label{eq:infDimLinSys}
\end{align}
where $Z_\infty$ and $Y_\infty$ are Hilbert spaces, $A_\infty$ is the infinitesimal generator of the strongly continuous semigroup $T(t)$ on $Z_\infty$, and $C_\infty$ is a bounded linear operator from $Z_\infty$ to $Y_\infty$.

Then the observability map of the system in \eqref{eq:infDimLinSys} on $[0,\tau]$ is given as $\mathcal{C}^{\tau}z := C T(\cdot) z$. The system in \eqref{eq:infDimLinSys} is \emph{approximately observable} on $[0,\tau]$ if the initial state is uniquely determined by the output $L_2([0,\tau]; Y)$, in which case ker $\mathcal{C}^\tau = \{0\}$ and the observability Gramian, $W_C^\tau=\mathcal{C}^{\tau*}\mathcal{C}^\tau$, is greater than zero \cite{Curtain2012,Georges2017}.

\subsection{Empirical Observability Gramian}

The empirical observability Gramian provides a method to approximate $W_o$ for nonlinear systems or for linear systems where the Gramian is difficult to calculate. The sensitivity of the measurements to changes in the initial conditions is captured by perturbing the initial states by a small value, $\epsilon$, and compiling the results as
\begin{equation*}
    W_o^\epsilon(t)=\frac{1}{4\epsilon^2} \int_{0}^{t} 
    \Delta Y^\top \Delta Y d\tau
    \label{eqn:empGram}
\end{equation*}
where $\Delta Y = \mtx{\Delta y^{\pm1}(\tau) & \Delta y^{\pm2}(\tau) & \cdots & \Delta y^{\pm n}(\tau) }$ is a $\mathbb{R}^{1 \times n}$ vector comprised of the differences of the scalar outputs, $\Delta y^{\pm i}(t) = y_i^+ - y_i^-$,  that result from perturbing the initial condition $\mathbf{x}_0$ by $\pm \epsilon \hat{\mathbf{e}}_i$.
If the empirical observability Gramian, $W_o^\epsilon(t) \in \mathbb{R}^{n \times n}$, is full rank at the limit $\epsilon\to0$, then the system is weakly observable at $\mathbf{x}_0$ \cite{Powel2015EmpiricalControl}. 

\subsection{Measures of Observability}
The following measures provide a method of quantifying the degree of observability based on the observability Gramian and allow for optimization over a potential sensor set \cite{Krener2009}.

\subsubsection{Local unobservability index}
This metric is the reciprocal of the minimum eigenvalue and provides a measure of the least observable mode
\begin{align*}
    J_{\nu}(W_o) = \frac{1}{\lambda_{min}(W_o)} = \nu.
\end{align*}
The smaller this number is, the better conditioned the inversion of the map from states to measurements will be.

\subsubsection{Local estimation condition number}
This measure is the ratio between the largest and smallest eigenvalues
\begin{align*}
    J_{\kappa}(W_o) = \frac{\lambda_{max}(W_o)}{\lambda_{min}(W_o)} = \kappa.
\end{align*}
The closer to one this measure is, the more balanced the information is available from the sensor output. The condition number is not a metric and must be used with some caution as it may prioritize minimizing $\lambda_{max}$.

For this paper, the objective function is a combination of the above measures with a weighting of $w = 5$, chosen empirically to provide a balance between the metrics,
\begin{align}
    J(W_o) = \kappa + w \nu. \label{eq:objFunc}
\end{align}

\subsection{Optimal Sensor Placement}
With binary sensor activation variables $\alpha_i \in \{0,1\}$ that indicate if a sensor is in use, define the total observability Gramian as the sum $\tilde{W}(\boldsymbol{\alpha}) = \sum_{i=1}^{n_p} W_i \alpha_i$ where $n_p$ is the number of potential sensor locations. 
The optimal placement for $p$ sensors given objective function $J(\tilde{W}_o)$ is 
\begin{eqnarray}
 \min_{\alpha} & & J(\tilde{W}_o) \nonumber  \\
\text{subject to}  & & \sum_{i=1}^{n_p} \alpha_i \leq p \nonumber \\
                   & & \alpha_{i} \in \{0,1\} \quad \forall i. \nonumber  \nonumber
\end{eqnarray} 
This formulation is, however, a non-convex mixed integer program, making it difficult to solve. By easing the requirement that the sensor activation variable, $\alpha_i \in \{0,1\}$, be binary and instead requiring a value between zero and one, $0 \leq a_i \leq 1$, the constraints become convex. If the objective function is convex with respect to the activation variables, the optimization problem is then convex. The measures discussed in the previous section are convex ($\nu$) or quasiconvex ($\kappa$) with respect to the variables $a_i$ \cite{Boyd2004}. This relaxation from binary to continuous sensor activation variables makes the solution tractable, however it is generally sub-optimal.

For a balance of condition number and inverse of the minimum eigenvalue, the following optimization problem can be posed:
\begin{eqnarray}
 \min_{\mathbf{a}, \kappa, \nu} & & \kappa + w \nu  \nonumber\\
\text{subject to}  & & \tilde{W}(\bar{\mathbf{a}}) - I \succeq 0  \nonumber \\
                   & & \kappa I - \tilde{W}(\bar{\mathbf{a}}) \succeq 0  \label{eq:optProb} \\
                   & & 0 \leq \bar{a}_{i} \leq \nu \quad \forall i  \nonumber \\
                   & & \sum_{i=1}^{n_p} \bar{a}_i \leq p \nu. \nonumber
\end{eqnarray}
The conditions are equivalent to $\lambda_{min} I \preceq \tilde{W}(\mathbf{a}) \preceq \lambda_{max} I$ along with the conditions on $a_i$ after normalizing the inequalities by $\lambda_{min}$ and applying the change of variables $\bar{a}_i = \nu  a_i$.

\section{Analytical Results}\label{sec:analytical}

Using the tools from the previous section, we provide here analytical results for the Euler-Bernoulli cantilever beam beginning with developing the analytical and empirical observability Gramians of the PDE system, $\Sigma_\infty$. 
Then we show that the ODE system $\Sigma_n$ is (technically) observable with a single sensor.

\subsection{Continuum Analytical Observability Gramian} \label{sec-continuumObsvGram}
For the linear ODE system, the observability Gramian can be defined through the derivative of the measurements with respect to the initial conditions as in (\ref{eq:Wlinear}) and (\ref{eqn:vargramian}). To accommodate the PDE system structure, we replace the derivative in (\ref{eqn:vargramian}) with the first variation. 
For the class of systems considered here, the full solution can be expressed as
\begin{align*}
    w(x,t) &= \sum_{i=1}^\infty \phi_i(x) \left( \alpha_{1,i} \cos(\omega_i t) + \alpha_{2,i} \frac{\sin(\omega_i t)}{\omega_i} \right)
\end{align*}
where $\alpha_{1,i}= \frac{1}{c_i}\int_0^L w_0(x)  \phi_i(x) dx$ with initial displacement $w_0(x) = w(x,0)$, $\alpha_{2,i}=\frac{1}{c_i}\int_0^L \dot{w}_0(x)  \phi_i(x) dx$ with initial velocity $\dot{w}_0(x) = \dot{w}(x,0)$, and $c_i = \int_0^L \phi_j(x)^2 dx$ is a normalization factor.
Perturbing the initial condition of the displacement, $w_0(x)$, by some function, $\epsilon f(x)$, for a small value of $\epsilon$, the modal coefficients, $\eta_i(t)$ become
\begin{align*}
    \eta_i(t)^{+ f} &= \eta_i(t) +  \epsilon C_i^f \cos(\omega_i t), 
\end{align*}
where $C_i^f = \frac{1}{c_i}\int_0^L f(x)  \phi_i(x) dx$. The perturbed general solution is then
\begin{align*}
    w(x,t)^{+ f} &= \sum_{i=1}^{\infty} \phi_i(x) \left( \alpha_{1,i}^{+f} \cos(\omega_i t) + \alpha_{2,i} \frac{\sin(\omega_i t)}{\omega_i} \right) \\
    &= w(x,t) + \epsilon \sum_{i=1}^{\infty} \phi_i(x) C_i^f \cos(\omega_i t).
\end{align*}
Since the system is second order with respect to time, the process is repeated by perturbing the initial velocity $\dot{w}_0(x)$ by $\epsilon g(x)$. The perturbed modal coefficients in this case become
\begin{align*}
  \eta_i(t)^{+ g} &= \eta_i(t) + \epsilon C_i^g \frac{\sin(\omega_i t)}{\omega_i}
\end{align*}
with $C_i^g = \frac{1}{c_i}\int_0^L  g(x)  \phi_i(x) dx$.
The resulting perturbations in the measurements are calculated using the modal coefficients as
\begin{align}
   y_\ell(t)^{+ f} &= h_\ell \sum_{i=1}^{\infty} \phi_{i,xx}(x_\ell) \left(\eta_i(t) + \epsilon C_i^f \cos(\omega_i t) \right) \label{eq:y_pm_f} \\
    y_\ell(t)^{+ g} &= h_\ell \sum_{i=1}^{\infty} \phi_{i,xx}(x_\ell) \left(\eta_i(t) + \epsilon C_i^g \frac{\sin(\omega_i t)}{\omega_i} \right). \label{eq:y_pm_g} 
\end{align}
\begin{definition}
 A \emph{functional}, $K$, is a map between functions, e.g., $K:z(t)\to K[z](t)$. 
 \end{definition}
 \begin{definition}
The \emph{first variation} of a functional $K[z](t)$ is a functional that maps the perturbation, $h$, to $\delta K[z;h] = \lim_{\epsilon \to 0} \frac{K[z + \epsilon h] - K[z]}{\epsilon}$  \cite{gelfand1963calculus}.
\end{definition}

Using this notation, we may view $w(x,t)$ as a functional mapping initial conditions, $[w_0,\dot{w}_0]$, to a solution of the initial boundary value problem \eqref{eq:freePDE}. In other words, $w[w_0,\dot{w}_0](x,t)$ is a functional. Similarly, we may view the measurements $y_\ell[w_0, \dot{w}_0](t)$ as functionals. Then the first variations of the measurement $y_\ell[w_0,\dot{w}_0](t)$ with respect to the functions $f(x)$ and $g(x)$ are

\begin{align}
    \delta y_\ell[w_0,\dot{w}_0; f](t) &= \lim_{\epsilon \to 0} \frac{y_\ell^{+f} - y_\ell}{\epsilon} \nonumber \\
    &=  h_\ell \sum_{i=1}^{\infty} \phi_{i,xx}(x_\ell) C_i^f \cos(\omega_i t) \\
    \delta y_\ell[w_0, \dot{w}_0; g](t) &= \lim_{\epsilon \to 0} \frac{y_\ell^{+g} - y_\ell}{\epsilon} \nonumber \\
    &=  h_\ell \sum_{i=1}^{\infty}\phi_{i,xx}(x_\ell) C_i^g \frac{\sin(\omega_i t)}{\omega_i}.
\end{align}
For the Gramian, the perturbation function is defined as each of the mode shapes in the following theorem.

\begin{theorem}[Continuum analytical observability Gramian] The observability Gramian, $W_\infty$, for system $\Sigma_\infty$ is given by 
\begin{equation}
    W_\infty(t,x_\ell) =  \int_0^t \delta W_\infty^\top \delta W_\infty d\tau \in \mathbb{R}^{n_\eta \times n_\eta},
\end{equation}
 where 
\begin{align*}
    \delta W_\infty^\top &= \mtx{ h_\ell \sum_{j=1}^{\infty} \phi_{j,xx}(x_\ell) \cos(\omega_j \tau) \\  h_\ell \sum_{j=1}^{\infty} \phi_{j,xx}(x_\ell) \frac{\sin(\omega_j \tau)}{\omega_j}~~ }.
\end{align*}

\end{theorem}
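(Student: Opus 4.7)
The plan is to mimic the LTI derivation leading to \eqref{eqn:vargramian}, replacing the partial derivative of the output with respect to the initial state by its infinite-dimensional analog: the first variation of the output functional with respect to perturbations of the initial data. Because the initial data for $\Sigma_\infty$ splits naturally into two independent functions --- an initial shape $w_0(x)$ and an initial velocity $\dot w_0(x)$ --- the resulting Gramian has size $n_\eta \times n_\eta$ with $n_\eta = 2$, one row and column for each class of initial perturbation.

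First, I would take the two first variations $\delta y_\ell[w_0,\dot w_0;f](t)$ and $\delta y_\ell[w_0,\dot w_0;g](t)$ already derived from \eqref{eq:y_pm_f} and \eqref{eq:y_pm_g} as the two scalar entries of a row vector $\delta W_\infty$. The remaining freedom is the choice of perturbation directions. Following the sentence immediately preceding the theorem --- that ``the perturbation function is defined as each of the mode shapes'' --- I would sum the per-mode variations over all $k$, equivalently choosing $f(x) = g(x) = \sum_{k=1}^{\infty} \phi_k(x)$. By the orthogonality relation $\int_0^L \phi_i(x)\phi_k(x)\,dx = c_i\,\delta_{ik}$, the projection coefficients $C_i^f$ and $C_i^g$ each collapse to unity, and substituting into the two variation expressions reproduces exactly the two entries of $\delta W_\infty^\top$ claimed in the theorem.

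With $\delta W_\infty$ in hand, the Gramian is assembled by forming the rank-one outer product $\delta W_\infty^\top \delta W_\infty$ at each time $\tau$ and integrating over $[0,t]$. This mirrors the scalar-output LTI expression \eqref{eqn:vargramian} exactly and produces the claimed $2 \times 2$ matrix-valued quantity $W_\infty(t,x_\ell)$; each entry is a time integral of products of $\cos(\omega_j\tau)$ or $\sin(\omega_j\tau)/\omega_j$ weighted by $\phi_{j,xx}(x_\ell)\phi_{k,xx}(x_\ell)$.

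The step I expect to require the most care is giving meaning to the entries of $\delta W_\infty$: the series $\sum_j \phi_{j,xx}(x_\ell)\cos(\omega_j\tau)$ does not converge classically at fixed $\tau$, since $\phi_{j,xx}(x_\ell)$ grows like $b_j^2 \sim j^2$. A fully rigorous argument would interpret the first variation in a distributional or weak sense, or restrict to a dense subspace of admissible smooth perturbations with sufficient modal decay, and would justify interchanging the modal sum with both the limit $\epsilon\to0$ defining the first variation and the time integral $\int_0^t d\tau$ defining the Gramian. The extra $1/\omega_j$ factor in the velocity-perturbation entry and the averaging afforded by integration in $\tau$ should both aid these exchanges.
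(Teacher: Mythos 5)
Your proposal is correct and follows essentially the same route as the paper's proof: perturb the initial displacement and velocity by each mode shape, use orthogonality to collapse $C_i^f$ and $C_i^g$ to Kronecker deltas, sum the resulting per-mode variations into the two entries $\delta Y_1,\delta Y_2$, and integrate the outer product $\delta W_\infty^\top \delta W_\infty$. Your closing remark on the non-convergence of $\sum_j \phi_{j,xx}(x_\ell)\cos(\omega_j\tau)$ is a legitimate concern that the paper does not address, but it does not alter the argument's structure.
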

\begin{proof}
We restrict the class of perturbations, $f(x)$ and $g(x)$, to a space whose basis is spanned by the mode shapes, $\phi_j$. Since our system is linear, we may then determine the sensitivity of the output to changes in the initial conditions by perturbing by each of the mode shapes. 
Beginning with displacement, define $f(x) = \phi_j(x)$. Then the coefficient $C_i^f$ becomes
\begin{align*}
    C_i^f = \frac{1}{c_i}\int_0^L \phi_j(x) \phi_i(x) dx =
    \begin{cases}
    1 & i = j \\
    0 & i \neq j,
    \end{cases}
\end{align*}
with the same results for $C_i^g$ when $g(x)$ is defined as $\phi_j(x)$. Then the variations $\delta y_{\ell}$ with respect to $f(x)$ and $g(x)$ simplify to
\begin{align*}
\delta y_\ell^f[w_0, \dot{w}_0; \phi_j](t) &=  h_\ell \phi_{j,xx}(x_\ell) \cos(\omega_j t) \\
\delta y_\ell^g[w_0, \dot{w}_0; \phi_j](t) &=  h_\ell\phi_{j,xx}(x_\ell) \frac{\sin(\omega_j t)}{\omega_j}.
\end{align*}
To include the sensitivity of the output to each of the modes, these are then summed together
\begin{align*}
   \delta Y_1 =   \sum_{j=1}^{\infty} \delta y_\ell^f[w_0, \dot{w}_0; \phi_j](t) \\
   \delta Y_2 =  \sum_{j=1}^{\infty} \delta y_\ell^g[w_0, \dot{w}_0; \phi_j](t).
\end{align*}
Defining $\delta W_\infty = \mtx{\delta Y_1 & \delta Y_2}$, we have the desired result.
\end{proof}
We note that our continuum observability Gramian agrees with other work on PDE observability \cite{Georges2017}. Replacing their nonlocal observation operator \cite[(40)]{Georges2017} with our local observation operator and accounting for appropriate changes for the eigenfunctions of the spatial operator, we see that their $\mathcal{C}^T u$ is equivalent to one component of our $\delta W_{\infty}^T$. This difference is expected since their PDE is scalar whereas ours is second-order in time. 

\subsection{Continuum Empirical Observability Gramian}
A key feature of the approach we took to finding the analytical observability Gramian is that it facilitates the construction of an empirical approach for situations where an analytical result is intractable.  
The empirical method is constructed by extending the  empirical Gramian technique for an ODE system in which each initial state in $\mathbf{x}_0$ is perturbed and simulated twice (for $\pm \epsilon \hat{\mathbf{e}}_i$).  In the continuum framework the initial conditions for the displacement, $w_0(x)$, and velocity, $\dot{w}_0(x)$, are perturbed instead by the mode shapes, $\pm \epsilon \phi_i(x)$, and the resulting changes to the measurements are calculated. 
This process is essentially the same as for the analytical Gramian, but it is important to show the process for this analytically tractible system so it is understood for systems that will rely on simulation; indeed, since the specific PDE we consider here is linear, the results are identical.

\begin{theorem}[Continuum empirical observability Gramian] The empirical observability Gramian, $W_\infty^\epsilon$, for system $\Sigma_\infty$ is given by 
\begin{equation}
    W_\infty^{\epsilon}(t,x_\ell) = \frac{}{}\int_0^t \Delta Y_\infty^\top \Delta Y_\infty d\tau  \in \mathbb{R}^{n_\eta \times n_\eta},
\end{equation}
 where 
\begin{align*}
    \Delta Y_\infty^\top &= \mtx{ \sum_{J=1}^{\infty}h_\ell \phi_{J,xx}(x_\ell) \cos(\omega_i \tau) \\ \sum_{J=1}^{\infty} h_\ell \phi_{J,xx}(x_\ell) \frac{\sin(\omega_i \tau)}{\omega_J} }.\\
\end{align*}
\end{theorem}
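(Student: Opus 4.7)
The plan is to mirror the proof of the analytical Gramian (Theorem 1) step by step, replacing the first variation by the finite-difference operator used in the empirical construction and then exploiting linearity of $\Sigma_\infty$ to collapse the finite differences to exactly the same expressions obtained for $\delta W_\infty$. First I would restrict the admissible perturbations of the initial displacement $w_0(x)$ and initial velocity $\dot w_0(x)$ to the span of the mode shapes $\{\phi_j\}$, which is the natural analog of the standard basis $\{\hat{\mathbf{e}}_i\}$ used in the ODE empirical Gramian. Setting $f(x)=\phi_j(x)$ in \eqref{eq:y_pm_f} and $g(x)=\phi_j(x)$ in \eqref{eq:y_pm_g} (and similarly for $-\phi_j$), the orthogonality calculation already used in Theorem~1 gives $C_i^f=C_i^g=\delta_{ij}$, so the perturbed outputs reduce to $y_\ell\pm\epsilon h_\ell\phi_{j,xx}(x_\ell)\cos(\omega_j t)$ and $y_\ell\pm\epsilon h_\ell\phi_{j,xx}(x_\ell)\sin(\omega_j t)/\omega_j$ respectively.

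Next I would form the scalar finite differences $\Delta y_\ell^{\pm j,f}=y_\ell^{+\phi_j}-y_\ell^{-\phi_j}$ for the displacement perturbation and $\Delta y_\ell^{\pm j,g}$ for the velocity perturbation. Because the map from initial conditions to outputs is linear, the even-order terms in $\epsilon$ cancel exactly (not merely to leading order), yielding
\begin{equation*}
\frac{1}{2\epsilon}\Delta y_\ell^{\pm j,f}=h_\ell\phi_{j,xx}(x_\ell)\cos(\omega_j t),\qquad
\frac{1}{2\epsilon}\Delta y_\ell^{\pm j,g}=h_\ell\phi_{j,xx}(x_\ell)\frac{\sin(\omega_j t)}{\omega_j}.
\end{equation*}
Summing over the modal index $j$ (which corresponds to compiling all perturbation directions into the row vector $\Delta Y$ exactly as in the ODE empirical Gramian definition) produces the two components of $\Delta Y_\infty^\top$ stated in the theorem, with the prefactor $1/(4\epsilon^2)$ in the definition of $W_o^\epsilon$ absorbed as $(1/(2\epsilon))^2$ into the outer product $\Delta Y_\infty^\top\Delta Y_\infty$. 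Substituting into the time integral gives the claimed expression for $W_\infty^\epsilon(t,x_\ell)$, and an immediate comparison with Theorem~1 shows $W_\infty^\epsilon=W_\infty$ for every $\epsilon>0$, matching the remark in the paragraph preceding the theorem.

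The main obstacle, and the step that deserves the most care, is justifying that the empirical construction — which was originally defined componentwise against a finite standard basis — carries over to perturbations by mode shapes in a continuum setting. I would address this by appealing to the fact that $\{\phi_j\}$ plays the role of an orthogonal basis for the Hilbert space of admissible initial conditions (in the norm induced by the normalization constants $c_i$), so that perturbing by $\pm\epsilon\phi_j$ is exactly the infinite-dimensional analog of perturbing by $\pm\epsilon\hat{\mathbf{e}}_i$, and the bookkeeping of the $1/(2\epsilon)$ normalization is unchanged. Convergence of the infinite sums in $\Delta Y_\infty$ is inherited from the convergence of the modal expansion of the PDE solution itself, so no new analytic machinery is required beyond what was already used implicitly in Theorem~1.
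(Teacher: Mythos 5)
Your proposal is correct and follows essentially the same route as the paper: perturb the initial displacement and velocity by $\pm\epsilon\phi_j$, use orthogonality of the mode shapes to isolate the $j$th modal coefficient, form the central difference normalized by $2\epsilon$, and sum over modes to assemble $\Delta Y_\infty$. The paper carries out the orthogonality step through the perturbed Fourier coefficients $\alpha_{k,i}^{\pm j}$ rather than the $C_i^f$ notation, but the computation and conclusion (including the exact agreement with the analytical Gramian for every $\epsilon$ due to linearity) are identical.
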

\begin{proof}
For the definition of the empirical Gramian, assume the beam can be perturbed a small amount exactly by an individual mode shape, $j$, that is 
\begin{align*}
    w_0^{\pm j}(x) &= w_0(x) \pm \epsilon_{1,j} \phi_j(x) \\ 
    \dot{w}_0^{\pm j}(x) &= \dot{w}_0(x) \pm \epsilon_{2,j} \phi_j(x).
\end{align*}
Due to the orthogonality of the mode shapes, the $\alpha_{k,i}$ components of the modal coefficients will in turn be perturbed by $\epsilon_{k,j}$
\begin{align*}
  \alpha_{1,i}^{\pm j} &= \frac{1}{c_i}\int_0^L \big(w_0(x) + \epsilon_{1,j} \phi_j(x) \big) \phi_i(x) dx\\
  &= \begin{cases}
     \alpha_{1,i} & i \neq j\\
     \alpha_{1,j} \pm \epsilon_{1,j} & i = j
  \end{cases}\\
  \alpha_{2,i}^{\pm j} &= \frac{1}{c_i}\int_0^L \big(\dot{w}_0(x) + \epsilon_{2,j} \phi_j(x) \big) \phi_i(x) dx \\
    &= \begin{cases}
     \alpha_{2,i} & i \neq j\\
     \alpha_{2,j} \pm \epsilon_{2,j} & i = j,
  \end{cases}
\end{align*}
so that the perturbed modal coefficients become
\begin{align*}
    \eta_i(t)^{\pm \epsilon_{1,j}} &= 
    \begin{cases}
    \eta_i(t) & i \neq j\\
    \eta_j(t) + \epsilon_{1,j} \cos(\omega_j t) & i = j
    \end{cases}\\
    \eta_i(t)^{\pm \epsilon_{2,j}} &=
    \begin{cases}
    \eta_i(t) & i \neq j\\
    \eta_j(t) + \epsilon_{2,j} \frac{\sin(\omega_j t)}{\omega_j} & i = j.
    \end{cases}    
\end{align*}
The measurements then become
\begin{align*}
    y(x_\ell,t)^{\pm \epsilon_{kj}} &= \left(\sum_{i=1}^{\infty} \phi_{i,xx}(x_\ell) \eta_i(t)\right) \pm \epsilon_{kj} \phi_{j,xx}(x_\ell) \beta_{kj}(t), \nonumber \\
    &\beta_{kj(t)} = \begin{cases}
    \cos(\omega_j t) & k = 1 \\
    \frac{\sin(\omega_j t)}{\omega_j} & k = 2
    \end{cases}
\end{align*}
so we have
\begin{align*}
    \Delta y(x_\ell,t)^{\pm \epsilon_{k,j}} 
    =  2 \epsilon_{k,j} h_\ell \phi_{j,xx}(x_\ell) \beta_{kj}(t).
\end{align*}
To generate the Gramian, each of the $\Delta y(x_\ell,t)^{\pm \epsilon_{k,j}}$ terms are divided by $2 \epsilon_{k,j}$ (since the empirical Gramian is calculated as a central difference) and summed together to form $\Delta Y_\infty$. 
\end{proof}

Since $\Sigma_\infty$ is a linear system, the $\epsilon_{kj}$ terms cancel so the analytical and empirical versions match exactly; in a nonlinear system, this would not be the case, and the matching would rely on the limit of $\epsilon_{kj}$ becoming small.

\subsection{Single Sensor Observability}
The truncated system, $\Sigma_n$, is LTI, so the analytical observability Gramian can be calculated with \eqref{eq:Wlinear} and the observability matrix with \eqref{eq:Omtx}. 

\begin{theorem}[Single sensor analytical observability] \label{thm:singleSensor}
The system $\Sigma_n$ is observable with a single sensor measurement if and only if that sensor is not located at a zero of the second derivative of any mode shape with respect to $x$.
\end{theorem}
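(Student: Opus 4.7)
The plan is to invoke the Popov--Belevitch--Hautus (PBH) eigenvector criterion for observability of the pair $(A, c_\ell)$, which reduces the theorem to verifying $c_\ell v \neq 0$ for every right eigenvector $v$ of $A$. The bulk of the argument is then a straightforward block diagonalization of $A$ followed by a one-line evaluation of $c_\ell$ on the resulting eigenvectors.

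First I would compute the eigenstructure of $A$. Partitioning a candidate eigenvector as $v = [v_1^\top,\, v_2^\top]^\top$ conformally with the block structure of $A$, the identity $A v = \lambda v$ yields $v_2 = \lambda v_1$ and $\Omega_{n_\phi} v_1 = \lambda^2 v_1$. Since $\Omega_{n_\phi} = \text{diag}(-\omega_1^2, \ldots, -\omega_{n_\phi}^2)$ with modal frequencies $\omega_i$ distinct and positive (the roots $b_i L$ of $\cos(b_iL)\cosh(b_iL) = -1$ are well known to be distinct, and $\omega_i = b_i^2\sqrt{EI/\mu}$ is monotone in $b_i$), the admissible eigenvalues are exactly $\lambda = \pm j\omega_i$, each simple, with eigenvectors $v_i^{\pm} = [e_i^\top,\; \pm j\omega_i\, e_i^\top]^\top$ where $e_i$ is the $i$th standard basis vector in $\mathbb{R}^{n_\phi}$.

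Next I would evaluate the PBH condition on these eigenvectors. Because the last $n_\phi$ entries of $c_\ell$ vanish, only the $v_1$ block contributes, so
\[
c_\ell v_i^\pm \;=\; h_\ell\, \phi_i''(x_\ell).
\]
Under the physical assumption $h_\ell \neq 0$, this is zero precisely when $\phi_i''(x_\ell) = 0$. Requiring $c_\ell v_i^\pm \neq 0$ for all $i = 1, \ldots, n_\phi$ is therefore exactly the condition stated in the theorem, giving both directions of the iff simultaneously.

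The only subtle point, and therefore the closest thing to a main obstacle, is that PBH in its eigenvector form requires the right eigenvectors to capture all observability obstructions. This is guaranteed once the $2n_\phi$ eigenvalues $\pm j\omega_i$ are simple, which was verified above via distinctness of the modal roots; no generalized eigenspaces complicate the picture. Beyond this one sanity check the proof is essentially mechanical and needs no further machinery.
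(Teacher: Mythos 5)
Your proof is correct, but it takes a genuinely different route from the paper. You invoke the PBH eigenvector criterion: you diagonalize $A$ by hand, find that the $2n_\phi$ eigenvalues $\pm j\omega_i$ are simple (using distinctness and positivity of the $\omega_i$), observe that the eigenvectors are $[e_i^\top,\ \pm j\omega_i e_i^\top]^\top$, and reduce the whole theorem to the single evaluation $c_\ell v_i^\pm = h_\ell\,\phi_{i,xx}(x_\ell)$. The paper instead works directly with the observability matrix $\mathcal{O}$: it reorganizes the rows into a block-diagonal form $\mathrm{diag}(\mathcal{O}_C,\mathcal{O}_C)$, factors $\mathcal{O}_C = h_\ell V^\top P$ with $V$ a Vandermonde matrix in the $d_j=-\omega_j^2$ and $P=\mathrm{diag}(\phi_{k,xx}(x_\ell))$, and computes $\det(\mathcal{O}_C) = h_\ell \prod_{i<j}(\omega_i^2-\omega_j^2)\prod_k \phi_{k,xx}(x_\ell)$ explicitly. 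Both arguments hinge on the same two facts --- distinct natural frequencies and nonvanishing modal curvatures at the sensor location --- and both implicitly assume $h_\ell \neq 0$. Your PBH route is cleaner and more modular, and your remark that simplicity of the eigenvalues is the one point needing care is exactly right (with a repeated frequency the eigenspace would be two-dimensional and a single row $c_\ell$ could never avoid its kernel). What the paper's determinant computation buys in exchange is quantitative information: the explicit Vandermonde factor is what justifies the paper's follow-up remark that single-sensor observability, while technically present, is numerically ill-conditioned --- information that a yes/no PBH test does not surface.
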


\begin{proof}
Assume we have a single measurement and denote $c'$ as first $n$ elements of $c_i$ in \eqref{eq:c_i},
so $y = \mtx{c' & \mathbf{0}}$ and the observability matrix \eqref{eq:Omtx} can be calculated as
\begin{align}
    \mathcal{O} 
    &=\mtx{c' & \mathbf{0} \\
           \mathbf{0} & c' \\
           \langle c',  \bar{\boldsymbol{\omega}}\rangle & \mathbf{0} \\
           \mathbf{0} & \langle c',  \bar{\boldsymbol{\omega}} \rangle \\
           \langle c', \bar{\boldsymbol{\omega}}.^2 \rangle & \mathbf{0} \\
           \mathbf{0} & \langle c', \bar{\boldsymbol{\omega}}.^2 \rangle \\
           \vdots & \vdots \\
           \langle c',  \bar{\boldsymbol{\omega}}.^{n-1} \rangle & \mathbf{0} \\
           \mathbf{0} & \langle c', \bar{\boldsymbol{\omega}}.^{n-1} \rangle
           }, \\
 \boldsymbol{\omega}.^a &=  \mtx{(-\omega_1^2)^a & (-\omega_2^2)^a & \cdots & (-\omega_n^2)^a} \nonumber
\end{align}
The rows of $\mathcal{O}$ can be reorganized to form a block diagonal matrix
\begin{align*}
\mathcal{O} = \mtx{\mathcal{O}_C & 0 \\ 0 & \mathcal{O}_C},
\end{align*}
so if $\mathcal{O}_C \in \mathbb{R}^{n_\phi \times n_\phi}$ is full rank, the system will be observable. To simplify notation let $d_j = -\omega_j^2$ and $p_k = \pd{^2\phi_k(x_{\ell})}{x^2}$; then 
   \begin{align*}
  \mathcal{O}_C 
    &= h_\ell \mtx{p_1 & p_2 & \cdots & p_{n_\phi} \\
         p_1 d_1 & p_2 d_2 & \cdots & p_{n_\phi} d_{n_\phi} \\
         p_1 d_1^2 & p_2 d_2^2 & \cdots &  p_{n_\phi} d_{n_\phi}^2 \\
         \vdots \\
          p_1 d_1^{{n_\phi}-1} &  p_2 d_2^{{n_\phi}-1} & \cdots & p_{n_\phi} d_n^{{n_\phi}-1}}.
\end{align*}
Note that $\mathcal{O}_C = h_\ell V^T P$, where $P = \text{diag}(p_1, p_2, \ldots, p_{n_\phi})$ and $V$ is the Vandermonde matrix in the $d_j$ with $\det(V) = \prod_{1\leq i \leq j \leq {n_\phi}} (d_j-d_i)$, therefore
\begin{align*}
    \det(\mathcal{O}_C) &= h_1 \det(V) \prod_{i=1}^{n_\phi} p_i 
    \\&= h_\ell \prod_{1\leq i \leq j \leq {n_\phi}} (\omega_i^2-\omega_j^2)  \prod_{k=1}^{n_\phi} \pd{^2 \phi_k(x_{\ell})}{x^2}.
\end{align*}
The first product is nonzero as long as $\omega_i \neq \omega_j$, which is always the case since each natural frequency is unique.
The determinant of $\mathcal{O}_C$ is thus zero if and only if the height of the beam, $h_\ell$, or the curvature of any mode shape is zero, that is if $p_i = \pd{^2 \phi_i(x_{\ell})}{x^2} = 0$ for any $i=1,2,\dots,{n_\phi}$.
\end{proof}

Note that Vandermonde matrices are generally ill-conditioned, so while the system would technically be observable with a single sensor, the estimation problem would likely be poorly conditioned.

\begin{figure}
    \centering
    \begin{subfigure}[b]{0.43\textwidth}
    \centering
       \includegraphics[width=\textwidth]{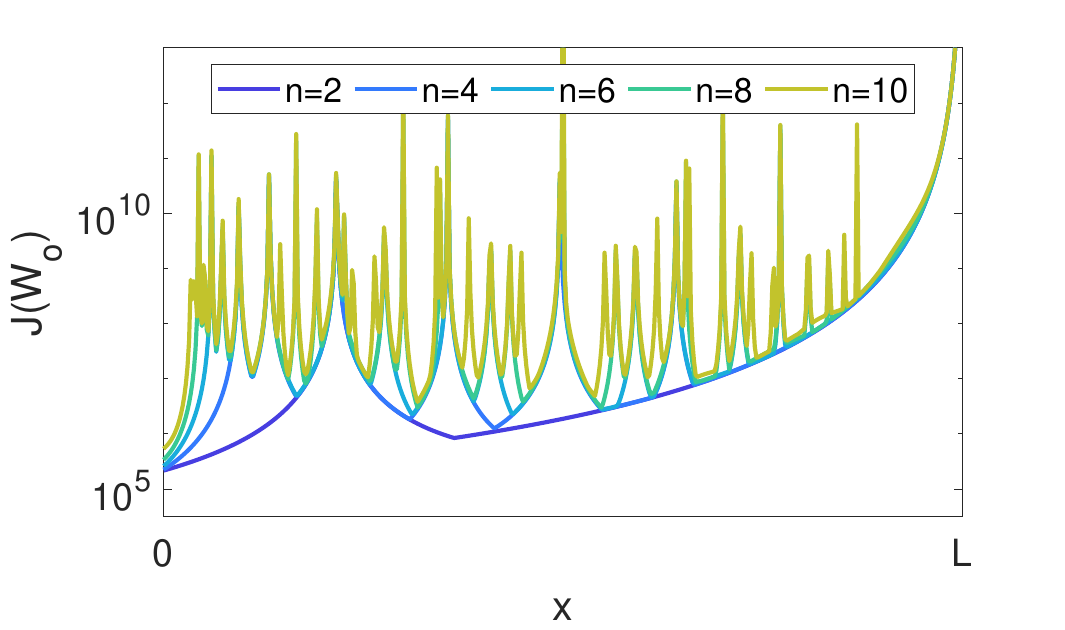}
    \caption{\centering Truncated System ($\Sigma_n$)}
    \label{fig:objFuncLinear}     
    \end{subfigure}
\hfill
    \begin{subfigure}[b]{0.43\textwidth}
       \includegraphics[width=\textwidth]{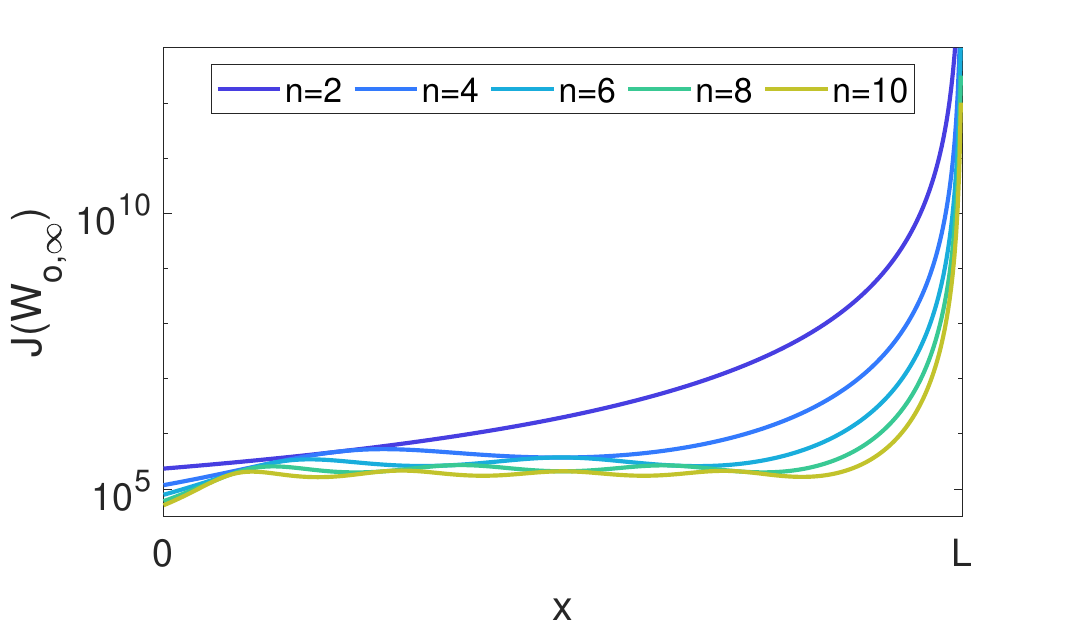}
    \caption{Continuum System ($\Sigma_\infty$)}
    \label{fig:objFuncContinuum}     
    \end{subfigure}
  \caption{The objective function of the observability Gramian, $J(W_o) = \kappa + w \nu$, plotted along the length of the beam for two to ten modes. \vspace{-1em}}
  \label{fig:objFuncBoth}
\end{figure}

\begin{figure}
    \centering
    \includegraphics[width = 0.49\textwidth]{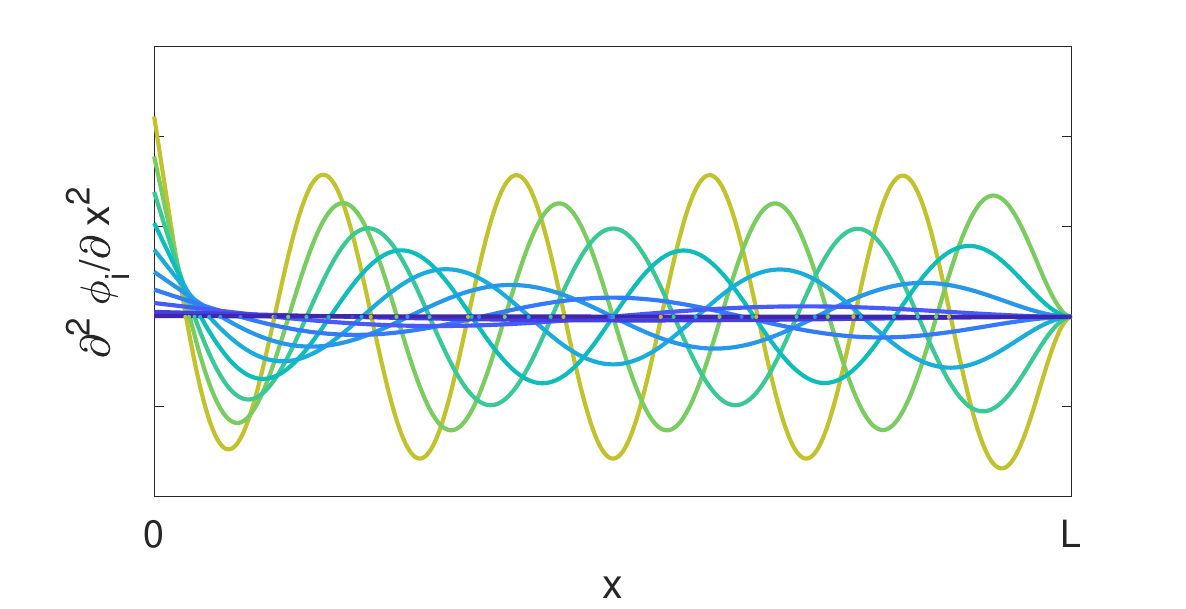}
    \caption{Second derivative of mode shapes with respect to $x$ for modes one through ten.}
    \label{fig:modes_strain}
\end{figure}

\begin{figure}
    \centering
    \includegraphics[height = 1.8in]{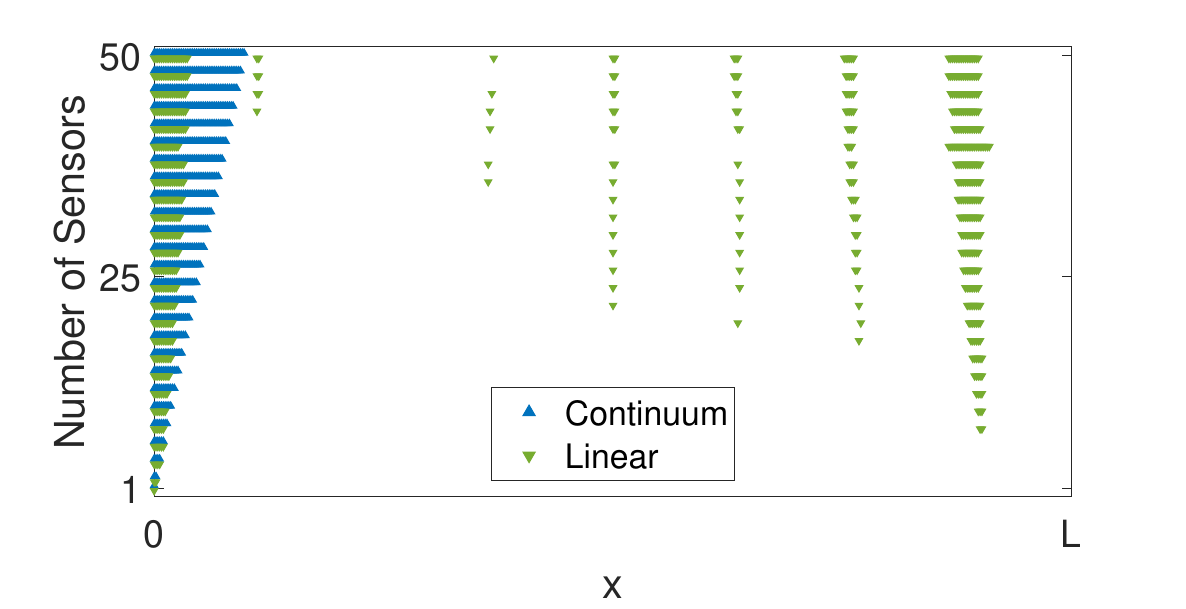}
    \caption{Optimal sensor locations for $n_\phi=8$ modes for the continuum and truncated systems ($\Sigma_\infty$ and $\Sigma_n$, respectively).}
    \label{fig:sensorPlacement}
\end{figure}

  \begin{figure}
    \centering
     \begin{subfigure}[b]{0.45\textwidth}
         \centering
         \includegraphics[height=1.8in]{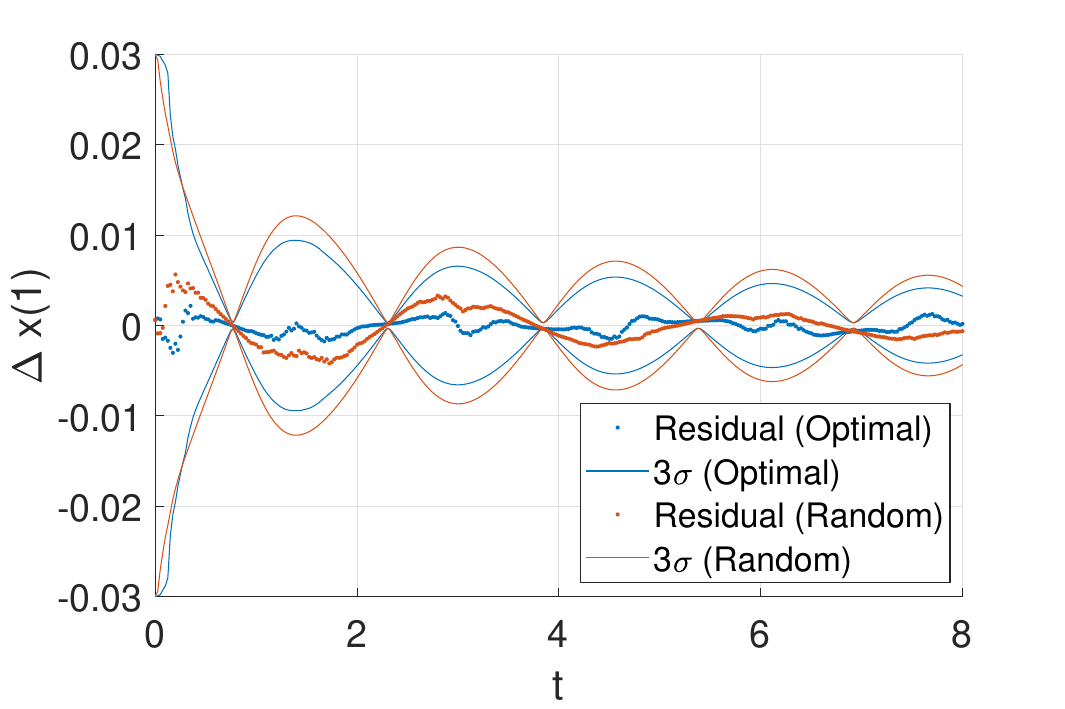}
         \caption{First mode, $\eta_1$}
         \label{fig:resid_mode1}
     \end{subfigure}
     \hfill
     \begin{subfigure}[b]{0.45\textwidth}
         \centering
         \includegraphics[height=1.8in]{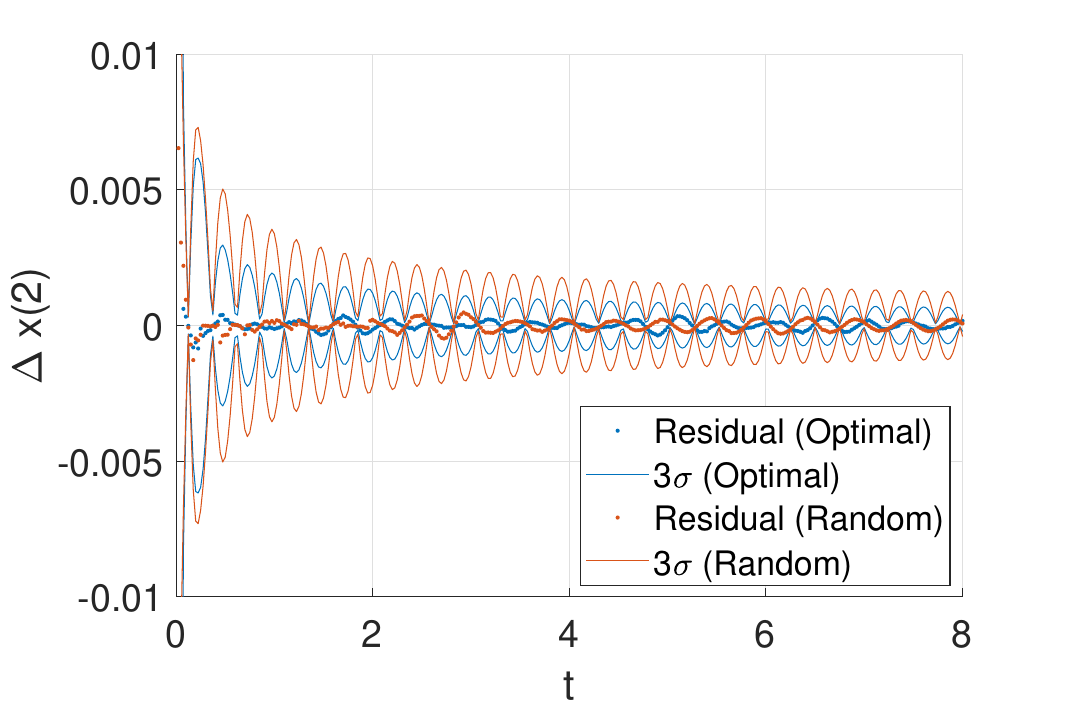}
         \caption{Second mode, $\eta_2$}
         \label{fig:resid_mode2}
     \end{subfigure}
        \caption{Residual error and $3\sigma$ covariance bounds from the UKF estimation of $\Sigma_n$ with $n=10$ modes based on measurements from ten strain sensors placed optimally (blue) and randomly (red).}
        \label{fig:optimal_results_resid}
\end{figure}

  \begin{figure}
         \centering
         \includegraphics[height=1.75in]{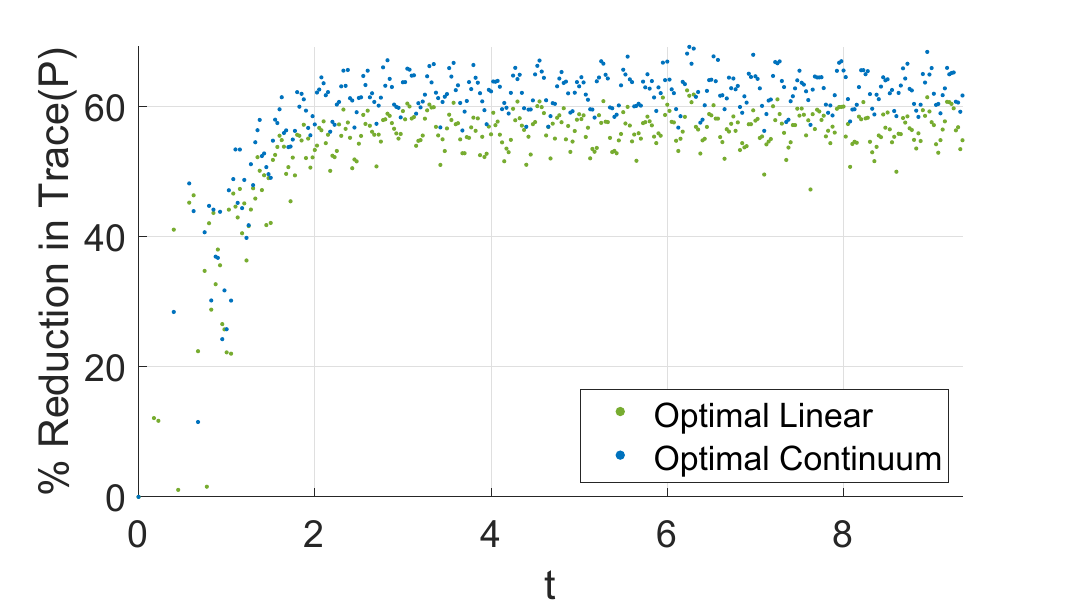}
        \caption{Comparison of the trace of the covariance matrix $P$ from the UKF estimation of $\Sigma_n$ with $n=10$ modes based on measurements from ten strain sensors placed optimally for $\Sigma_\infty$ (blue) and $\Sigma_n$ (green). \vspace{-1em}}
        \label{fig:optimal_results_cov}
\end{figure}   

\section{Numerical Results} \label{sec:numerical}

To evaluate the results of the previous sections, a rectangular cantilever beam was simulated in MATLAB.
The simulated beam was 5 mm thick, 20 mm wide, and 2m long with properties of aluminum (modulus of elasticity of 70 GPa and density of 2,700 kg/m$^3$) and discretized into $N=501$ sections along its length. The dynamics were approximated using the first $n_\phi=8$ modes. The empirical observability Gramians $W_o^\epsilon \in \mathbb{R}^{16 \times 16}$ and $W_{o,\infty}^\epsilon \in \mathbb{R}^{2 \times 2}$ were calculated for strain sensors at each of the 501 evenly distributed locations along the beam.

In order to compare the observability results of the continuum system to the finite dimensional truncated system, we must account for the differences in the dimensionality of these results.  We facilitate this comparison through the use of a scalar metric of the Gramian for each of the frameworks. Figure \ref{fig:objFuncBoth} shows the objective function, $J(W_o^\epsilon)$, from (\ref{eq:objFunc}) for the truncated and continuum cases; in both cases, the objective function is smallest near the fixed end of the beam where strain is the largest and becomes large at the free end where the strain goes to zero. The peaks in the plots of the truncated system (Fig. \ref{fig:objFuncLinear}) correspond to zeros of $\phi_{i,xx}(x)$ (Fig. \ref{fig:modes_strain}) where the system would not be observable with a single measurement, as proven in theorem \ref{thm:singleSensor}.

To determine the optimal sensor placement, the optimization problem \eqref{eq:optProb} was then solved using CVX \cite{cvx} for each of the systems for a maximum of one to fifty sensors; the optimal sensor placement is shown in Fig. \ref{fig:sensorPlacement}. The optimal locations for both the truncated and continuum systems include sensors near the fixed end of the beam $(x=0)$ where the strain energy is highest. For higher numbers of sensors, the truncated system also places sensors further out along the beam, possibly due to the energy associated with strain of some of the higher mode shapes or the structure of the Gramian (outer product of two terms vs n terms). For a nonlinear system, the results of the optimization may not be as intuitive as those of these relatively simple linear systems, but the same sensor placement problem can be posed for nonlinear systems: once the empirical observability Gramians have been constructed from the nonlinear system dynamics, the optimization algorithm follows the same process regardless of the underlying system dynamics. 

To compare the results of optimal to naïve sensor placement, an Unscented Kalman Filter (UKF) was used to estimate the states of the ODE system, $\eta_i(t)$ and $\dot{\eta}_i(t)$, for $i = 1,\dots,10$. Ten strain sensors were simulated with normally distributed measurement noise with a covariance $R = 10^{-4} I^{n\times n}$ and the estimate covariance was initialized as a diagonal matrix with $P_{0,\eta_i} = 10^{-2}$ and $P_{0,\dot{\eta_i}} = 10^{-4}$. Optimally placed sensors based on both $\Sigma_n$ and $\Sigma_\infty$ performed better than randomly placed sensors in terms of the error residuals and the error covariance. Plots of the residual errors and $3\sigma$ bounds for the first two states are shown in Fig. \ref{fig:optimal_results_resid}. The percent reduction of the trace of the covariance matrix $P$ of the optimally compared to randomly placed sensors is shown in \ref{fig:optimal_results_cov}; both sets of optimal sensors reduced the trace of the covariance by over 50\% as compared to the randomly place sensors.
                  
\section{Conclusions} \label{sec:future}

This work develops analytical and empirical observability Gramians for a PDE system that describes a freely vibrating cantilevered Euler-Bernoulli beam, uses those Gramians for optimal sensor placement, and compares the results with those of the finite approximation ODE.
While there are similarities between the Gramians for the continuum and finite approximation in terms of the chosen objective function, more work is needed to understand the underlying structural similarities and, perhaps more importantly, differences as we look to apply these techniques to more complicated (nonlinear and generally analytically intractable) systems such as aircraft or insect wings.
As an intermediate step, we plan to extend this work to a flat plate connected to a moving, rigid body and by including a damping terms.
With the addition of damping, the system can be described by a Riesz-spectral operator, and there are additional methods available for analyzing the infinite-dimensional system.

To generalize the application of the empirical observability Gramian, more work is required to determine what requirements there are for the initial perturbation functions to allow construction of the empirical Gramian.
In simulation, perturbing the system exactly by a mode shape is easily done; however, for systems that are to be analyzed by perturbing a physical model instead, such perturbations would likely be infeasible.

\section*{Acknowledgements} 

The authors would like to thank B. Boyac\i o\u{g}lu for his contributions to the development and editing of this manuscript.

\IEEEtriggeratref{5}
\bibliographystyle{ieeetr}
\bibliography{EulerBernoulliBeamRefs.bib}

\end{document}